\theoremstyle{definition}
\newtheorem{theorem}{Theorem}
\newtheorem{proposition}[theorem]{Proposition}
\theoremstyle{definition}
\newtheorem{definition}{Definition}
\newtheorem{assumption}{\textit{Assumption}}
\def\BibTeX{{\rm B\kern-.05em{\sc i\kern-.025em b}\kern-.08em
    T\kern-.1667em\lower.7ex\hbox{E}\kern-.125emX}}
\begin{document}

\title{\fontsize{19}{22}\selectfont Uncertainty-Aware Capacity Expansion for Real-World DER Deployment via End-to-End Network Integration}

\author{
    \IEEEauthorblockN{Yiyuan Pan*\IEEEauthorrefmark{1}, Yiheng Xie*\IEEEauthorrefmark{2}, Steven Low\IEEEauthorrefmark{2}\thanks{*These authors contributed equally to this work.}}
    \IEEEauthorblockA{
        \IEEEauthorrefmark{1}Shanghai Jiao Tong University, Shanghai, China\\
        Email: pyy030406@sjtu.edu.cn}
    \IEEEauthorblockA{
        \IEEEauthorrefmark{2}California Institute of Technology, Pasadena, CA, USA\\
        Emails: \{yxie5, slow\}@caltech.edu}
}

\maketitle

\begin{abstract}
The deployment of distributed energy resource (DER) devices plays a critical role in distribution grids, offering multiple value streams, including decarbonization, provision of ancillary services, non-wire alternatives, and enhanced grid flexibility. However, existing research on capacity expansion suffers from two major limitations that undermine the realistic accuracy of the proposed models: (i) the lack of modeling of three-phase unbalanced AC distribution networks, and (ii) the absence of explicit treatment of model uncertainty. To address these challenges, we develop a two-stage robust optimization model that incorporates a 3-phase unbalanced power flow model for solving the capacity expansion problem. Furthermore, we integrate a predictive neural network with the optimization model in an end-to-end training framework to handle uncertain variables with provable guarantees. Finally, we validate the proposed framework using real-world power grid data collected from our partner distribution system operators. The experimental results demonstrate that our hybrid framework, which combines the strengths of optimization models and neural networks, provides tractable decision-making support for DER deployments in real-world scenarios. 
\end{abstract}

\section{Introduction}
\subsection{Background}
In recent years, to meet the growing demands for power supply reliability and decarbonization, substantial planning and construction of new transmission and energy storage facilities have been undertaken. This process is commonly referred to as capacity expansion. Among the new additions, a particular class of hardware known as distributed energy resource (DER) devices—such as solar panels, battery systems, and electric vehicle charging stations—has garnered significant attention, as they play a pivotal role in integrating local renewable energy and enhancing overall energy dispatch flexibility. The widespread deployment of DER devices has brought profound changes to net consumption behaviors, which are characterized by greater uncertainty in electricity usage patterns \cite{13}. For example, DER devices such as distributed energy storage systems can enable power utility to smooth regional load curves to enable peak shaving through demand response. Consequently, the study of capacity expansion and subsequent optimal dispatch has emerged as a promising research topic. Modeling and analyzing the impacts of DER deployment on future electricity consumption behaviors and grid performance can provide valuable insights for power utilities.

\subsection{Related Works}
\vspace{-0.1cm}
Existing research on capacity expansion often oversimplifies the modeling of distribution networks by treating three-phase unbalanced systems as single-phase equivalents, overlooking the critical complexities and operational challenges inherent in realistic networks. Several studies \cite{1,2,3,4} have formulated the capacity expansion and optimal dispatch problem as a sequential two-stage optimization framework. In the first stage, the optimal investment strategy for renewable energy generation is determined. In the second stage, the optimal power dispatch strategy is formulated for the transmission network to minimize operational costs. Building upon this foundation, \cite{5} further incorporates electrical and topological constraints at the distribution side level. Unlike earlier works, this study extends the second stage decision variables to include bus voltages and voltage angles, beyond merely treating power flows as decision variables. While these studies provide valuable insights into the sequential and multi-stage nature of the capacity expansion problem, they fail to capture the impact of realistic regional distribution side structures, including their electrical and topological characteristics, on dispatch strategies. Studies \cite{1,2,3,4} lack detailed modeling of the distribution side and remain confined to the transmission level, whereas \cite{5} limits its analysis to single-phase AC networks, which is inconsistent with practical three-phase AC distribution systems. 

Even with the development of three-phase AC distribution side models, they alone cannot serve as effective tools for practical decision-making due to the increasing uncertainty in real-world systems, such as real-time electricity prices and load demands. To address these challenges, \cite{7,8} employ predefined Gaussian probability distributions to represent load uncertainties. Additionally, \cite{9} adopts a general polyhedral set to describe these uncertainties and solves the multi-stage problem with uncertainties using the column-and-constraint generation (C\&CG) algorithm. Furthermore, \cite{10} refines the modeling approach by representing uncertainties as deviations between potential real values and historical data, parameterized by predefined distributions, resulting in more accurate solutions. While these works enhance the realism of capacity expansion modeling and improve the effectiveness of the solutions by introducing uncertainty variables, they lack the ability to adaptively predict these uncertainties based on the specific operating environment of the regional grid. This limitation implies that in dynamically changing real-world conditions, such as weather variations or regional characteristics that cause shifts in the underlying data distributions, these models lose their validity and effectiveness.

\subsection{Methodology}
In summary, existing research on capacity expansion and optimal dispatch faces two key challenges in model fidelity. (i) Current capacity expansion studies lack the establishment of distribution side models from the perspective of three-phase AC systems, thereby neglecting the impacts of electrical structures and grid topologies on dispatch strategies. (ii) Conventional approaches to modeling uncertainty in optimal dispatch fail to incorporate predictive capabilities, which compromises the adaptability of such models.

To address these issues, this paper proposes a robust two-stage optimization model for capacity expansion with DER deployment and subsequent dispatch, integrated with a hybrid framework that end-to-end trains a predictive neural network alongside the optimization model. First, in the proposed two-stage robust optimization model, the inner-stage optimal dispatch problem is formulated as a maximization-minimization problem based on the linear DistFlow model, explicitly considering the presence of transformers in the real-world grid.

To address the challenges posed by introducing uncertain variables, which transform the problem into a semi-infinite optimization problem that is computationally intractable, we leverage the properties of affine policies and uncertainty sets. By doing so, we transform the two-stage robust optimization model into a single-stage optimization problem. This transformation not only simplifies the solution process but also enables gradient computation, which is essential for integrating the optimization model with neural network training in an end-to-end framework. Then, by leveraging the properties of affine policies and uncertainty sets, we transform the two-stage robust optimization model into a single-stage optimization problem, as two-stage problems are typically solved iteratively and do not permit gradient computation. By doing so, we are also able to overcome the computational challenges arising from the semi-infinite nature introduced by uncertain variables. Finally, we utilize a hybrid framework that integrates predictive neural networks with the downstream single-stage optimization model, enabling accurate prediction of uncertainties while simultaneously determining the optimal dispatch strategy (see Fig \ref{fig1}). Inspired by \cite{12,999}, to accelerate the convergence of the predictive network for task-specific learning and enhance its performance in the capacity expansion task, both the prediction loss and the optimization objective function are utilized to update the neural network. To validate the effectiveness of the proposed framework, we employ real-world data collected from a regional grid in Southern California. Experimental results demonstrate that the framework not only achieves accurate uncertainty predictions but also optimizes subsequent dispatch strategies based on the predictions.

The key contributions of this paper are as follows:
\begin{itemize}
    \item We develop a robust two-stage optimization model for capacity expansion with DER device deployment, explicitly designed for three-phase AC distribution sides. Specifically, the model accounts for the grid topology and the injection behavior at the device level for each bus, enabling it to capture the characteristics of real power systems more accurately compared to existing models.
    \item Our task-based hybrid framework can make adaptive and real-time predictions concerning each input and provide corresponding scheduling strategies. In particular, it predicts real-time electricity prices and loads based on inputs such as local weather conditions and integrates these predictions into subsequent optimization processes. This approach significantly enhances the adaptability of the overall framework.
    \item We collect and utilize real-world grid data from Southern California to validate the proposed method. To the best of our knowledge, this is the first study to employ actual three-phase AC distribution data for validation. The results enable a comprehensive evaluation of the model's efficacy and reliability.
\end{itemize}

\section{Capacity Expansion Optimization Problem}
\subsection{Problem Overview}
A typical distribution network comprises a set of buses and the lines connecting these buses. Without loss of generality, let $\mathcal{N}=\{0,1,...,N\}$ denote the set of buses, and define $\mathcal{N}^+=\mathcal{N}\setminus\{0\}$. Each line connects a directed pair $(i,j)$ of buses. Let $\mathcal{E}$ denote the set of lines, where $(i,j)\in\mathcal{E}$. Consistent with most existing works, the dispatch process is formulated as a two-stage robust optimization model. As a power utility, we need to determine whether to deploy DER devices at these $N$ locations and make subsequent dispatch decisions. The optimization is performed over a planning horizon of $\mathcal{T}=\{0,1,...,\tau\}$. In our paper, DERs include battery energy storage systems (BESS) and photovoltaic (PV) panels. Each bus is assumed to have PV panels already installed, and the decision to be made is whether to deploy BESS. The objective is to minimize operational costs while ensuring the satisfaction of grid constraints.

\begin{figure}
    \centering
    \includegraphics[width=0.47\textwidth]{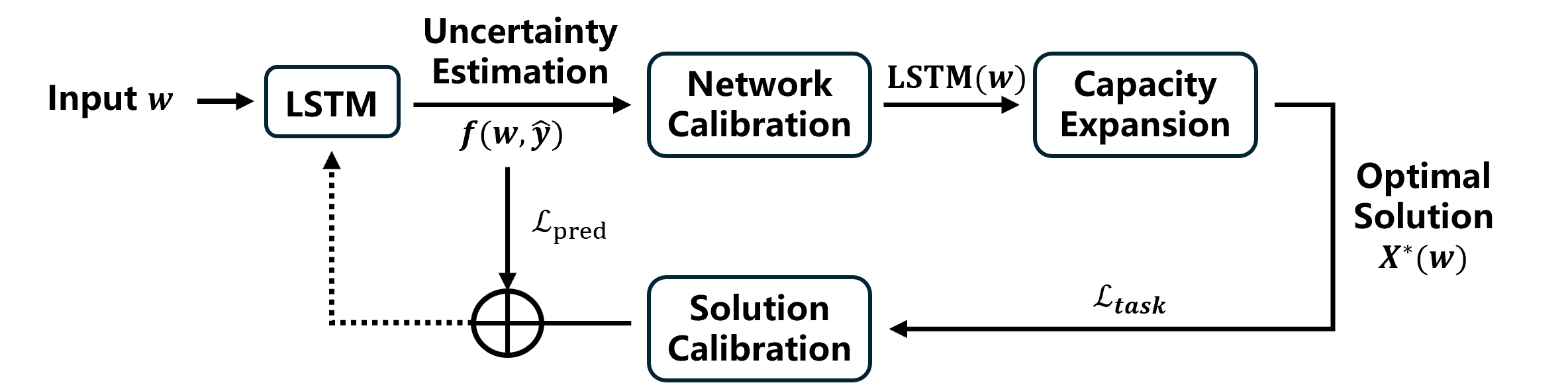}
    \caption{Hybrid training framework for capacity expansion.}
    \label{fig1}
\end{figure}

\subsection{Objective Function}
In our two-stage robust optimization model, the objective function aims to minimize the deployment costs of DER devices while achieving the minimization of grid power procurement costs under the worst-case scenario. The minimization of procurement costs corresponds to the optimal power flow (OPF) problem. For the distribution network model in the OPF problem, we adopt the branch flow model (BFM) from \cite{11} under the assumption that it satisfies Assumption \ref{assu1} outlined below.
\begin{assumption}\label{assu1}
The distribution network satisfies the following conditions:
\begin{enumerate}
    \item The network topology is radial (tree network).
    \item The admittance matrix of the line between any two buses is invertible (this assumption will be relaxed in Section \ref{real}).
    \item Line losses during power transmission are negligible.
    \item The voltages at each bus are balanced, satisfying a three-phase system with a $120^\circ$ phase difference between phases.
\end{enumerate}
\end{assumption}

Under this assumption, the distribution network's BFM simplifies to the linear DistFlow model \cite{11}. Furthermore, due to the radial structure and power conservation, the power procurement cost can be directly calculated based on the real power at the slack bus (bus $0$). Therefore, the objective function is expressed as:
\begin{gather}
\min_{x}c^Tx+\max_{\hat{y_P},\hat{y_D}}\min_{z}\sum_{t=0}^{\tau}\hat{y_P^t}^T\text{Re}(s_0^t)
\end{gather}
where $x\in\mathbb{R}^{N}$ represent the decision variable for the sizing and siting of BESS during the capacity expansion process, where an entry $x_i = 0$ indicates no BESS installation at bus $i$. The parameter $c\in\mathbb{R}^{N}$ represents the installation cost of BESS at each bus. In the inner max-min problem, $\hat{y}_p\in\mathbb{R}^{3}$ and $\hat{y}_d\in\mathbb{R}^{3}$ are the uncertain variables representing the real-time electricity prices and load demands, respectively. The operator $\text{Re}(\cdot)$ denotes the real part of a complex value. For the BFM, $z = \{s,v\}$, where $s_j^t\in\mathbb{C}^{3}$ denoted the complex power at each bus and $v_j^t\in\mathbb{C}^{3\times3}$ represents the squared magnitude matrices of bus voltage, for each $j\in\mathcal{N}, t\in\mathcal{T}$. The variable $s_0$ denotes the complex power at the slack bus.

\subsection{OPF Constraints}
For the inner max-min OPF problem, we first define the following power flow constraints:
\begin{subequations}
\begin{align}
    & \begin{bmatrix}
        v_j & S_{jk} \\
        S_{jk}^H & i_{jk}
    \end{bmatrix} \ge 0, 
    \forall j\rightarrow k\in \mathcal{E} \label{2a}\\
    & \text{rank}\left(\begin{bmatrix}
        v_j & S_{jk} \\
        S_{jk}^H & i_{jk}
    \end{bmatrix}\right)=1,
    \forall j\rightarrow k\in \mathcal{E} \label{2b}
\end{align}
\end{subequations}

Although the BFM dose not include decision variables $V_j$ or $i_{jk}$, the positive semi-definite rank-1 constraints (\ref{2a})(\ref{2b}) ensures that there exist $V_j$ and $I_{jk}$ such that
\begin{equation}
    v_j=V_jV_j^H, \ i_{jk}=I_{jk}I_{jk}^H, S_{jk}=V_jI_{jk}^H, \forall j\rightarrow k\in\mathcal{E} \label{3}\\
\end{equation}

Next, we present the equality constraints derived from the network topology within the three-phase linear DistFlow model:
\begin{subequations}
\begin{align}
    & \sum\nolimits_{j\in\mathcal{N}}s_j=0 \label{4a} \\
    & \lambda_{ij}=-\sum\nolimits_{k\in T_j}s_k, \forall j\rightarrow k\in\mathcal{E} \label{4b} \\
    & S_{ij}=\gamma\cdot\text{diag}(\lambda_{ij}), \forall j\rightarrow k\in\mathcal{E} \label{4c} \\
    & v_0-v_j=\sum\nolimits_{(j,k)\in P_j}(z_{jk}S_{jk}^H+S_{jk}z_{jk}^H), \forall j\in\mathcal{N} \label{4d}
\end{align}
\end{subequations}

Constraint (\ref{4a}) enforces power balance across the network, ensuring that the total injected power at all buses sums to zero. Constraint (\ref{4b}) defines the auxiliary variable $\lambda_{ij}$, where $T_j$ is the subtree rooted at bus $j$, including $j$. Constraint (\ref{4c}) relates the complex power matrix $S_{ij}$ to the auxiliary variable $\lambda_{ij}$ through a constant phase vector $\gamma$ and the diagonal operator. Constraint (\ref{4d}) calculates the voltage difference between the slack bus and bus $j$ as a function of power flow $S_{jk}$ and line impedance $z_{jk}$ along the unique path $P_j$ from bus $0$ to bus $j$, where $(\cdot)^{H}$ represents the Hermitian transpose of the matrix. The detailed derivation and theoretical background of these constraints can be found in \cite{11}. The definition of $\gamma$ is given as follows:
\begin{equation}
    \gamma = \begin{bmatrix}
        1 & \alpha^2 & \alpha \\
        \alpha & 1 & \alpha^2 \\
        \alpha^2 & \alpha & 1
    \end{bmatrix}, \ \alpha:= \mathbf{e}^{-\mathbf{i}2\pi/3} \label{5}
\end{equation}

We now present the general operational constraints for power system management.
\begin{subequations}
\begin{align}
    & s_j^{\text{min}}\le s_j \le s_j^{\text{max}}, \forall j\in \mathcal{N} \label{6a} \\
    & v_j^{\text{min}}\le \text{diag}(v_j) \le v_j^{\text{max}}, \forall j\in \mathcal{N} \label{6b} \\
    & \text{diag}(i_{jk})\le i_{jk}^{\text{max}} \label{6c}
\end{align}
\end{subequations}

Finally, we present the constraints related to the deployment and operation of DER devices at each bus.
\begin{subequations}
\begin{align}
    & \text{Re}(s_j)=\text{Re}(s_j^d)+\text{Re}(s_j^p)+\hat{y_D}, \forall j\in \mathcal{N} \label{7a} \\
    & s^{d,\text{min}}\le s_j^{d,t} \le s^{d,\text{max}}, \forall j\in \mathcal{N} \label{7b} \\
    & \text{SOC}_{j}^{t+1} = \text{SOC}_{j}^{t}+\text{Re}(s_j^{d,t}), \forall j\in\mathcal{N},t\in\mathcal{T}\setminus\{\tau\} \label{7c} \\
    & \text{SOC}_{j}^{0} = \mathbf{0}, \forall j\in \mathcal{N} \label{7d} \\
    & \text{SOC}^{\text{min}}\cdot (x)_j \le \text{SOC}^t_{j} \le \text{SOC}^{\text{max}}\cdot (x)_j, \forall j\in \mathcal{N} \label{7e}
\end{align}
\end{subequations}

Constraint (\ref{7a}) defines the components of power injection at each bus, where $s_j^d$ represents the BESS dispatch power at bus $j$, $s_j^p$ denotes the power generated by the PV panels at bus $j$, and $y_d$ is the load demand at the same bus. The PV power $s_j^p$ is assumed to be constant at each time step and is derived from historical data. Constraint (\ref{7b}) sets the operational bounds for BESS dispatch, ensuring that the dispatched power remains within the allowable range. Constraints (\ref{7c}) and (\ref{7d}) specify the state-of-charge (SOC) dynamics of each BESS, including the recursive formula for SOC evolution from time $t$ to $t+1$ and the initial SOC values. Moreover, Constraint (\ref{7d}) indicates that the BESS capacity is directly tied to the power utility's capacity expansion decisions $x$, where $(\cdot)_j$ denotes its $j$-th entry.

We now formulate the capacity expansion problem as a two-stage model, which is expressed as follows:
\begin{align}
    & \min_{x}c^Tx+\max_{\hat{y_P},\hat{y_D}}\min_{z}\sum_{t=0}^{\tau}\hat{y_P^t}^T\text{Re}(s_0^t)  \\
    & \ s.t. \ (2), (4)-(7) \notag
\end{align}

\subsection{Non-invertible Admittance Matrices}\label{real}
However, Constraint (\ref{4d}) in the above model is not realistic. In real-world distribution networks, $\Delta Y$-configured transformers are widely present. Due to the existence of such transformers, the admittance matrix between bus $i$ and bus $j$ is generally asymmetric and non-invertible, which implies that the second condition in Assumption \ref{assu1} is not valid. As a result, the constraints between $j\rightarrow i$ and $i\rightarrow j$ should have distinct forms and should not be expressed directly in terms of impedance. In doing so, we removed the second condition from Assumption \ref{assu1}.

To reformulate this constraint, we begin by considering the three-phase circuit, where series admittances $y_{jk}^s \neq y_{kj}^s$ and shunt admittance $y_{jk}^m = \mathbf{0}$ under Assumption \ref{assu1}. Based on Kirchhoff's Current Law (KCL) and Ohm's Law, the following equations can be derived.
\begin{align}
    & \begin{cases}
    I_{jk} = y_{jk}(V_j - V_k), S_{jk}=V_jI_{jk}^H, j\rightarrow k \in \mathcal{E} \\
    I_{kj} = y_{kj}(V_k - V_j), S_{kj}=V_kI_{kj}^H, k\rightarrow j \in \mathcal{E}
    \end{cases}
\end{align}

Taking $j\rightarrow k$ as an example, we substitute the expression for current $I$ into the complex power equation. 
\begin{equation}
    S_{jk}^Hy_{jk}^H=y_{jk}v_jy_{jk}^H-y_{jk}(V_kV_j^H)y^H_{jk}, j\rightarrow k \in \mathcal{E}
\end{equation}

Next, under the negligible line loss assumption ($i_{jk} = \mathbf{0}$), we rewrite voltage $V_j=V_j-V_k+V_k$ and substitute it into the above equations.
\begin{equation}
    S^H_{jk}y^H_{jk}+[y_{jk}V_kI^H_{jk}+i_{jk}]=y_{jk}(v_j-v_k)y_{jk}^H, j\rightarrow k \in \mathcal{E}
\end{equation}
where
\begin{equation}
    y_{jk}V_kI_{jk}^H+i_{jk}=[y_{jk}V_k+y_{jk}(V_j-V_k)]I^H_{jk}=y_{jk}S_{jk}
\end{equation}

Finally, we obtain:
\begin{align}
    & \begin{cases}
    y_{jk}(v_j-v_k)y_{jk}^H=S^H_{jk}y_{jk}^H+y_{jk}S_{jk}, j\rightarrow k \in \mathcal{E} \\
    y_{kj}(v_k-v_j)y_{kj}^H=S^H_{kj}y_{kj}^H+y_{kj}S_{kj}, k\rightarrow j \in \mathcal{E}
    \end{cases} \label{13}
\end{align}

However, the derived Constraint (\ref{13}) cannot be directly used to solve for each element of the $S$ matrix due to the presence of additional degrees of freedom. The derivation process of Constraint (\ref{4d}) in \cite{11} employs the balanced voltage condition from Assumption \ref{assu1}, which eliminates all degrees of freedom and enables the computation of every element in the $S$ matrix. Therefore, additional constraints must be imposed on $S$ or $v$. Moreover, in our OPF problem, the constraints on $v_j$, as expressed in Constraint \ref{3} or its equivalent form in Constraints (\ref{2a})(\ref{2b}), correspond to rank-1 constraints. These are intractable for existing solvers in computational environments. To address these two issues, we need to incorporate the balanced voltage assumption to reformulate and derive new constraints for $v$.

Under the balanced voltage assumption, the voltage at bus $j$ can be represented as:
\begin{equation}
    V_j=V_j^a\cdot\begin{bmatrix}
        1 \\
        \alpha \\
        \alpha^2
    \end{bmatrix} = V_j^a\alpha_+, \forall j\in \mathcal{N},V_j^a\in \mathbb{C}
\end{equation}
where $(\cdot)^a$ denotes the first phase (phase $a$) of the corresponding three-phase system $(a, b, c)$. Consequently, the voltage matrix $v_j$ can be reformulated as:
\begin{equation}
    v_j=V_jV_j^H=V_j^a{V_j^a}^*\alpha_+\alpha_+^H=v_j^c A, \forall j\in \mathcal{N}
\end{equation}
where $(\cdot)^*$ denotes the conjugate operation, $v_j^c$ is a scalar real number, and $A$ is a known rank-1 matrix. This transformation does not introduce any new decision variables.

To this end, we derive a solvable and realistic model for the capacity expansion problem.
We now formulate the capacity expansion problem as a two-stage model, which is expressed as follows:
\begin{align}
    & \min_{x}c^Tx+\max_{\hat{y_P},\hat{y_D}}\min_{z}\sum_{t=0}^{\tau}\hat{y_P^t}^T\text{Re}(s_0^t)  \label{16}\\
    & \ s.t. \ (4a)-(4c), (5)-(7), (13), (15) \notag
\end{align}

\subsection{Single-stage Reformulation}
However, due to the iterative nature of solving two-stage problems, the feasibility of the solution cannot be guaranteed, and the obtained solution may not be exact. This limitation prevents the direct use of the problem formulation in (\ref{16}) for end-to-end training and gradient update of the network. Therefore, it is necessary to reformulate the two-stage robust optimization problem into a single-stage optimization layer. Specifically, this requires eliminating the uncertainty layer and transforming the original min-max-min problem into a standard minimization problem, enabling the computation of gradient updates for network training. For simplicity, the uncertainties considered in this paper are represented using the box uncertainty set defined as follows:
\begin{equation}
    \hat{y_P} \in [\underline{y_P},\overline{y_P}]=\mathcal{Y}_P, \ \hat{y_D} \in [\underline{y_D},\overline{y_D}]=\mathcal{Y}_D
\end{equation}

\subsubsection{Electricity Price Uncertainty}
Given that the problem is formulated as a robust optimization problem, which focuses solely on the worst-case scenario, and that electricity prices are positively correlated with the objective function, they can be directly set to their worst-case values, i.e., their maximum values $\overline{y_P}$.

\subsubsection{Load Demand Uncertainty}
For the uncertain variable $y_d$, its coupling with the constraints leads to the optimization problem being classified as a semi-infinite problem. In this case, inspired by \cite{12}, we employ an affine policy to reformulate the problem. The core idea is to identify the dominant set of the uncertain variable, enabling us to approximate the solution of the original two-stage robust optimization problem by solving a simplified single-level problem over the dominant set. To ensure the completeness of our work, we provide the necessary details and proofs for certain propositions. 

For the box uncertainty set of load demand, the following Assumption \ref{assu2} holds.
\begin{assumption}\label{assu2}
    An uncertainty set $\mathcal{U}\subseteq[0,1]^m$ is convex, full-dimensional with $\mathbf{e}_i\in \mathcal{U}$ for all $i=1,...,m$ and down-monotone, i.e., $h\in\mathcal{U}$ and $0\le h'\le h$ implies that $h'\in\mathcal{U}$
\end{assumption}

We then give the definition of a dominant set concerning the uncertainty set.
\begin{definition}
    (\textit{Dominant Set}) Given an uncertainty set $\mathcal{U}\subseteq\mathbb{R}^M_+$, $\hat{\mathcal{U}}\subseteq\mathbb{R}^M_+$ dominates $\mathcal{U}$ if for all $h\in\mathcal{U}$, there exists $\hat{h}\in\hat{\mathcal{U}}$ such that $\hat{h}\ge h$.
\end{definition}
For the box uncertainty set, we present the following proposition:
\begin{proposition}
    For a box uncertainty set $\mathcal{U}=\{h\in[0,1]^m$ $|\sum\nolimits_{i=1}^m h_i\le k\}$, we have $\hat{\mathcal{U}}=\beta\cdot\text{conv}(\mathbf{e}_1,...,\frac{k}{m}\mathbf{e})$ as its dominant set, where $\beta=\text{min}(k,\frac{m}{k})$.
\end{proposition}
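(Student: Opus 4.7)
The plan is to verify the dominance condition by splitting on which branch of $\beta=\min(k,m/k)$ is active, and in each case producing an explicit convex combination of the generators $\mathbf{e}_1,\ldots,\mathbf{e}_m,\tfrac{k}{m}\mathbf{e}$ that (after scaling by $\beta$) pointwise majorizes an arbitrary $h\in\mathcal{U}$. In both regimes, the task reduces to exhibiting nonnegative weights that sum to $1$, so no duality or extreme-point enumeration is needed.

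First, I would dispose of the regime $k\ge\sqrt{m}$, where $\beta=m/k$. Here the single point $\beta\cdot\tfrac{k}{m}\mathbf{e}=\mathbf{e}$ is itself a (scaled) generator of $\hat{\mathcal{U}}$, obtained by putting unit weight on the center generator and zero weight on every $\mathbf{e}_i$. Since any $h\in\mathcal{U}\subseteq[0,1]^m$ satisfies $h_i\le 1=(\mathbf{e})_i$ componentwise, $\mathbf{e}$ dominates $h$ trivially. This case is essentially a one-line verification once the scaling identity $(m/k)(k/m)=1$ is noted.

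Next, for the regime $k\le\sqrt{m}$, where $\beta=k$, I would construct the dominating point coordinatewise. Given $h\in\mathcal{U}$, set $\lambda_i:=h_i/k$ and $\mu:=1-\sum_i\lambda_i$. Feasibility as a convex combination is immediate: $\lambda_i\ge 0$ from $h_i\ge 0$, and $\mu\ge 0$ from $\sum_i h_i\le k$. The candidate point $\hat{h}=\beta\bigl(\sum_i\lambda_i\mathbf{e}_i+\mu\cdot\tfrac{k}{m}\mathbf{e}\bigr)\in\hat{\mathcal{U}}$ then has $i$-th coordinate
\begin{equation*}
\hat{h}_i \;=\; k\cdot\tfrac{h_i}{k}+k\cdot\tfrac{k}{m}\mu \;=\; h_i+\tfrac{k^2}{m}\mu \;\ge\; h_i,
\end{equation*}
since $\mu\ge 0$. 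This yields the required dominance.

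The only real obstacle is organizational rather than technical: one must confirm that the same convex-combination recipe is not expected to work uniformly across the two regimes, and that the constant $\beta=\min(k,m/k)$ is precisely what is needed so that one of the two recipes succeeds in each regime. A sanity check at the crossover $k=\sqrt{m}$ (where both branches give $\beta=\sqrt{m}$ and both recipes dominate $h$) and at the boundary values $k=1$ (guaranteed by Assumption~\ref{assu2}, since $\mathbf{e}_i\in\mathcal{U}$) should be included to make the argument airtight.
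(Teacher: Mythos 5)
Your proof is correct and follows essentially the same route as the paper: a case split on which branch of $\beta=\min(k,m/k)$ is active, with an explicit convex combination of the generators in each case. The only (minor, beneficial) difference is that in the $\beta=k$ regime you handle arbitrary $h\in\mathcal{U}$ directly by placing the residual weight $\mu=1-\sum_i h_i/k$ on the center generator $\tfrac{k}{m}\mathbf{e}$, whereas the paper first reduces to boundary points with $\sum_i h_i=k$ so that the weights $h_i/k$ sum to one; your variant sidesteps that reduction and the slight looseness in its justification.
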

\begin{proof}
    To prove that $\hat{\mathcal{U}}$ is the dominant set, it suffices to consider $h$ on the boundary of $\mathcal{U}$, i.e., satisfying $\sum\nolimits_{i=1}^{m+1} h_i = k$, and to find $(\alpha_1, \alpha_2, \ldots, \alpha_{m+1})$ as non-negative real numbers such that $\sum\nolimits_{i=1}^{m+1} \alpha_i = 1$ and, for all $i \in [m]$, the inequality $h_i \leq \beta \left( \alpha_i + \frac{k}{m} \alpha_{m+1} \right)$ holds.
    
    In the first case, when $\beta = k$, we choose $\alpha_i = \frac{h_i}{k}$ for $i \in [m]$ and $\alpha_{m+1} = 0$. Clearly, $\sum\nolimits_{i=1}^{m+1} \alpha_i = 1$ holds, and for all $i \in [m]$, we have:
    \[
    \beta \left( \alpha_i + \frac{k}{m} \alpha_{m+1} \right) = k \cdot \frac{h_i}{k} = h_i.
    \]
    
    In the second case, when $\beta = \frac{m}{k}$, we choose $\alpha_i = 0$ for $i \in [m]$ and $\alpha_{m+1} = 1$. Again, $\sum\nolimits_{i=1}^{m+1} \alpha_i = 1$ holds, and for all $i \in [m]$, we have:
    \[
    \beta \left( \alpha_i + \frac{k}{m} \alpha_{m+1} \right) = \frac{m}{k} \cdot \frac{k}{m} = 1 \geq h_i.
    \]    
\end{proof}

Thus, in our problem, the uncertain variable $\hat{y_D} \in [\underline{y_D}, \overline{y_D}]$ can be reformulated as $\hat{y_D} \in \mathcal{Y}_D=\frac{\underline{y_D} + \overline{y_D}}{2} + \frac{\overline{y_D} - \underline{y_D}}{2} \cdot [\mathbf{0}, \mathbf{1}]$. Based on this reformulation, we identify that $m = 1$ and $k = 1$ for the actual uncertain component. Under these circumstances, the value of $\beta$ is set to 1. Therefore, for the uncertain set $\mathcal{Y}_D$, its dominant set $\hat{\mathcal{Y}_D}$ can be determined by taking the maximum value for each dimension.
\begin{equation}
    \hat{\mathcal{Y}_D}=\overline{y_D} = \left\{ y_D' \ \middle| \ (y_D')_i = \max_{\hat{y_D} \in \mathcal{Y}_D} (\hat{y_D})_i, \ \forall i \in [3] \right\}
\end{equation}

After obtaining the dominant set, we present the proposition from \cite{12}, which defines both the original problem and the approximate problem while providing the approximation accuracy between the two.
\begin{proposition}
For a general two-stage robust optimization problem:
\begin{align}
    & z(\mathcal{U})=\min c^Tx+\max_{h\in\mathcal{U}}\min_{y(h)}d^Ty(h) \notag
\end{align}
\begin{align}
    & \ s.t. \ Ax+By(h)\ge h, \ \forall h\in \mathcal{U} \notag \\
    & \qquad \  x \in \mathbb{R}_+, \ y(h) \in \mathbb{R}_+ \notag
\end{align}
where $\mathcal{U}$ is the uncertainty set. Using the dominant set $\hat{\mathcal{U}}$, we derive the approximation problem
\begin{align}
    & z(\hat{\mathcal{U}}) = \min_{x, z} \ c^T x + z, \notag \\
    & s.t. \ z \geq d^T y_i, \ \forall i \in [m+1] \notag \\
    & \qquad Ax + B y_i \geq \beta e_i, \ \forall i \in [m] \notag \\
    & \qquad Ax + B y_{m+1} \geq \beta v \notag \\
    & \qquad x \in \mathbb{R}_+, \ y_i \in \mathbb{R}_+, \ \forall i \in [m+1] \notag
\end{align}
and corresponding approximation bound:
\[z(\hat{\mathcal{U}}) \leq z(\mathcal{U}) \leq \beta \cdot z(\hat{\mathcal{U}})\]
\end{proposition}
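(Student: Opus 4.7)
The plan is to establish the two inequalities separately, each by a constructive translation between feasible solutions of the two problems. The upper bound $z(\mathcal{U}) \leq \beta z(\hat{\mathcal{U}})$ will follow from the even stronger statement $z(\mathcal{U}) \leq z(\hat{\mathcal{U}})$, obtained by a direct affine-policy lifting of the approximation's optimum. The lower bound $z(\hat{\mathcal{U}}) \leq z(\mathcal{U})$ is more delicate and is where the scaling factor $\beta$ and the dominance property from Proposition~2 really earn their keep.

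For the upper bound, I would take an optimal triple $(x^*, \{y_i^*\}, z^*)$ of the approximation and define an affine policy on $\mathcal{U}$. By Proposition~2, each $h \in \mathcal{U}$ admits convex multipliers $\alpha_i \geq 0$ with $\sum_i \alpha_i = 1$ and $h \leq \beta \sum_i \alpha_i u_i$, where the $u_i$ are the simplex generators $e_1, \ldots, e_m, v$. Setting $x := x^*$ and $y(h) := \sum_i \alpha_i y_i^*$ yields
\[
A x + B y(h) = \sum_i \alpha_i (A x^* + B y_i^*) \geq \sum_i \alpha_i (\beta u_i) = \beta \sum_i \alpha_i u_i \geq h,
\]
so $(x, y(\cdot))$ is feasible for the original problem. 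Since $d^T y(h) \leq \max_i d^T y_i^* \leq z^*$, its cost is at most $c^T x^* + z^* = z(\hat{\mathcal{U}})$, giving $z(\mathcal{U}) \leq z(\hat{\mathcal{U}}) \leq \beta z(\hat{\mathcal{U}})$.

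For the lower bound, I would start from an optimal $(x^*, y^*(\cdot))$ of the original and construct a scenario triple $(x, \{y_i\}, z)$ feasible for the approximation with cost at most $z(\mathcal{U})$. The target constraints are $A x + B y_i \geq \beta u_i$. When $\beta \leq 1$ --- as in the paper's application, where $m = k = 1$ forces $\beta = 1$ --- the point $\beta u_i$ itself lies in $\mathcal{U}$ by down-monotonicity of the uncertainty set, so the choice $x = x^*$, $y_i = y^*(\beta u_i)$ inherits feasibility directly from the original, and $z := \max_i d^T y_i \leq \max_{h \in \mathcal{U}} d^T y^*(h)$ delivers the bound.

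I expect the main obstacle to be the general case $\beta > 1$: the vertices $\beta u_i$ then lie outside $\mathcal{U}$, so $y^*(\cdot)$ cannot be evaluated at them directly. The resolution would exploit the positive homogeneity and monotonicity of the second-stage value function $\phi(r) := \min\{d^T y : B y \geq r,\ y \geq 0\}$ to express $z(\hat{\mathcal{U}})$ as $\beta$ times a scenario LP over the unscaled generator set $T := \operatorname{conv}(e_1, \ldots, v) \subseteq \mathcal{U}$, and then compare this to the adjustable problem over $\mathcal{U}$ via the inclusion monotonicity $z(T) \leq z(\mathcal{U})$. This scaling reconciliation --- the technical core of Bertsimas and Goyal's original framework --- is where I expect the real work to be concentrated, while the affine-policy argument above is essentially routine.
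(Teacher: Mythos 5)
Your first direction is correct and is essentially the paper's argument: you lift an optimal scenario solution of the approximation to a policy on $\mathcal{U}$ via the convex multipliers supplied by the dominance property, obtaining $z(\mathcal{U}) \leq z(\hat{\mathcal{U}})$. (In fact your version is the more careful one, since the paper's proof of this step simply evaluates a policy $\hat{y}(\cdot)$ at a dominating point without spelling out the convex combination of the $y_i^*$.) One bookkeeping remark: this inequality, together with the $z(\hat{\mathcal{U}}) \leq z(\mathcal{U})$ you announce as your "lower bound," would force equality of the two values; the chain that is actually provable — and that the paper's own proof establishes, notwithstanding the displayed statement of the proposition — is $z(\mathcal{U}) \leq z(\hat{\mathcal{U}}) \leq \beta\, z(\mathcal{U})$.

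The genuine gap is in the second direction for $\beta > 1$, which you explicitly defer as "the technical core," and the route you sketch for it does not quite work as stated. You propose to use positive homogeneity of the recourse value function $\phi(r) := \min\{d^T y : By \geq r,\ y \geq 0\}$, but $\phi$ omits the first-stage term: the constraint is $Ax + By \geq h$, so for fixed $x = x^*$, feasibility of some $y$ for the right-hand side $u_i$ gives $Ax^* + By \geq u_i$, and scaling only $y$ yields $Ax^* + \beta By \geq \beta u_i + (1-\beta)Ax^*$, not $\geq \beta u_i$. Homogeneity must be applied \emph{jointly} to $(x, y)$. This is exactly what the paper does, and it closes the argument in three lines with no further machinery: for any $\hat{h} \in \hat{\mathcal{U}}$ one has $\hat{h}/\beta \in \operatorname{conv}(\mathbf{e}_1, \ldots, \tfrac{k}{m}\mathbf{e}) \subseteq \mathcal{U}$, hence $Ax^* + By^*(\hat{h}/\beta) \geq \hat{h}/\beta$; multiplying through by $\beta$ shows that $\bigl(\beta x^*,\ \beta y^*(\hat{h}/\beta)\bigr)$ is feasible for the dominant-set problem, and since the objective is linear the cost of this scaled solution is $\beta$ times that of the original, giving $z(\hat{\mathcal{U}}) \leq \beta\, z(\mathcal{U})$. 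So the "scaling reconciliation" you anticipate is not where deep work lies — it is this one joint rescaling — but your sketch as written would stall because the first-stage variable is left unscaled. Your $\beta \leq 1$ branch is also vacuous here, since $\beta = \min(k, m/k) \geq 1$ whenever $1 \leq k \leq m$; at $\beta = 1$ your direct evaluation coincides with the paper's scaling argument.
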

\begin{proof}
Let $(\hat{x}, \hat{y}(\hat{h}), \hat{h} \in \hat{\mathcal{U}})$ be an optimal solution for $z(\hat{\mathcal{U}})$. For each $h \in \mathcal{U}$, define $\tilde{y}(h) = \hat{y}(\hat{h})$. Therefore, the following holds:
\[
A \hat{x} + B \tilde{y}(h) = A \hat{x} + B \hat{y}(\hat{h}) \geq h,
\]
which implies $z(\mathcal{U}) \leq z(\hat{\mathcal{U}})$.

Conversely, let $(x^*, y^*(h), h \in \mathcal{U})$ be a feasible solution for $z(\mathcal{U})$. Then, for any $\hat{h} \in \hat{\mathcal{U}}$, since $\frac{\hat{h}}{\beta} \in \mathcal{U}$, we have:
\[
A x^* + B y^*\left(\frac{\hat{h}}{\beta}\right) \geq \frac{\hat{h}}{\beta}.
\]

By scaling, this implies:
\[
A (\beta x^*) + B (\beta y^*)\left(\frac{\hat{h}}{\beta}\right) \geq \hat{h}.
\]

Thus, the feasible solution $\left(\beta x^*, \beta y^*\left(\frac{\hat{h}}{\beta}\right), \hat{h} \in \hat{\mathcal{U}}\right)$ is also feasible for $z(\hat{\mathcal{U}})$, and we have: $z(\hat{\mathcal{U}}) \leq \beta \cdot z(\mathcal{U})$.
\end{proof}
To this end, we can rewrite Constraint (\ref{7b}) and derive the single-layer formulation of the original two-stage problem.
\begin{align}
    & \min_{x,s,v}c^Tx+\eta  \label{18}\\
    & \ s.t. \ \eta\ge\overline{y_P}\cdot\text{Re}(s_0) \notag \\
    & \qquad  \text{Re}(s_j)=\text{Re}(s_j^d)+\text{Re}(s_j^p)+y_D', \forall j\in \mathcal{N}, y_D'\in \hat{\mathcal{Y}_D} \notag \\
    & \qquad  (4a)-(4c), (5)-(6), (7b)-(7e), (13), (15) \notag
\end{align}

\section{Hybrid Traning Framework}
To enhance decision-making robustness and improve performance in the capacity expansion task, we employ a hybrid framework that integrates the optimization model into the training process of the predictive network. Specifically, the predictive network and the optimization model form an end-to-end closed-loop system: the network provides predictions of uncertainties, which serve as parameters for the optimization model, while the optimization model leverages its objective function to update the network parameters, thereby improving the network's task-specific performance.
\subsection{Network Module}
We employ an LSTM-based network, which can handle inputs and outputs of arbitrary lengths, to predict future electricity prices and expected loads. The input to the network, denoted as $\mathbf{w}$, consists of $M$ historical time steps of real-time electricity prices, weather conditions (wind speed, temperature, humidity, and solar radiation intensity), and the actual loads of the $N$ buses. In total, the input includes ($N \cdot M + 4 \cdot M + 1$) features. The network predicts the expected loads and real-time electricity prices for the next $\tau+1$ time steps, where $\tau+1$ corresponds to the optimization horizon of the model. 

To enhance robust decision-making, the network outputs uncertainty estimates for these variables, represented as box uncertainty in this paper, denoted as $\mathcal{Y}_D$ and $\mathcal{Y}_P$. For the obtained box uncertainty, we compute the loss using the negative log-likelihood (NLL) between the predicted uncertainty bounds and the ground truth values $Y_P=\{y_P^0,...,y_P^\tau\}$ and $Y_D=\{y_D^0,...,y_D^\tau\}$ over the $\tau$ time steps. This loss is used to update the network parameters.
\begin{equation}
    \mathcal{L}_{pred}=\sum_{t=0}^\tau \text{NLL}(\underline{y_P}^t,\overline{y_P}^t,y_P^t)+\text{NLL}(\underline{y_D}^t,\overline{y_D}^t,y_D^t)
\end{equation}

However, a more precise metric is required to evaluate the coverage level of the predicted box uncertainty relative to the ground truth, which is essential for estimating the effectiveness of decisions derived from the optimization model. Let $\mathcal{Y}=\mathcal{Y}_P\cup\mathcal{Y}_D$ and $\hat{y}\in\mathcal{Y}$. Inspired by \cite{12}, we represent the process of generating box uncertainty from the LSTM network as the following general form:
\begin{equation}
    \text{LSTM}(w)=\{\hat{y}\in\mathbb{R}^3 | s(w,\hat{y})\leq q\}
\end{equation}
where $s:\mathbb{R}^{NM+4M+1}\times\mathbb{R}^3\rightarrow \mathbb{R}$ is a nonconformity score function and $q\in\mathbb{R}$. To select $q$ during inference, we apply a split conformal prediction process, ensuring that $\text{LSTM}(w)$ provides a marginal coverage at a certain coverage level $\alpha$, as formalized in the following Definition \ref{def2}.
\begin{definition}\label{def2}
    (\textit{marginal coverage}) An uncertainty set for an unknown data distribution $\mathcal{P}$ provides marginal coverage at the $(1-\alpha)$-level if $\mathbb{P}_{(w,\hat{y})\sim\mathcal{P}}(\hat{y}\in\text{LSTM}(w))\in 1-\alpha$.
\end{definition}

Then, we use Proposition \ref{prop3} from \cite{15} to represent the relationships between $q$ and $\alpha$ (see Appendix D in \cite{15} for the proof).
\begin{proposition}\label{prop3}
Let the dataset $\mathcal{D} = \{(w_n, \hat{y}_n)\}_{n=1}^M$ be sampled i.i.d. from the implicit distribution $\mathcal{P}$ obtained during the training phase. Let $q$ be the $(1 - \alpha)$-quantile for the set $\{s(w_n, \hat{y}_n)\}_{n=1}^M$, then $\text{LSTM}(w)$ satisfies the following guarantee:
\[
1 - \alpha \leq \mathbb{P}_{w, \hat{y}, \mathcal{D}}(\hat{y} \in \text{LSTM}(w)) \leq 1 - \alpha + \frac{1}{M + 1}.
\]
\end{proposition}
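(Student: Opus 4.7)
The plan is to exploit the exchangeability of i.i.d.\ samples to reduce the coverage statement to a statement about the rank of a single uniformly-distributed index. Write $s_n := s(w_n,\hat y_n)$ for the calibration scores and $s_{M+1} := s(w,\hat y)$ for the fresh test score. Since $(w_n,\hat y_n)_{n=1}^{M}$ together with $(w,\hat y)$ are i.i.d.\ from $\mathcal{P}$, any deterministic symmetric function of them has a distribution invariant under permutations; in particular the tuple $(s_1,\dots,s_{M+1})$ is exchangeable.

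Next I would translate the event $\hat y\in\text{LSTM}(w)$, which by definition reads $s_{M+1}\le q$, into a statement about the rank of $s_{M+1}$ in the augmented multiset $\{s_1,\dots,s_{M+1}\}$. Because $q$ is chosen as the $(1-\alpha)$-quantile of the calibration scores, $s_{M+1}\le q$ holds iff $s_{M+1}$ is among the roughly $\lceil (1-\alpha)(M+1)\rceil$ smallest of the $M+1$ scores. By exchangeability this rank is uniform on $\{1,\dots,M+1\}$, so the coverage probability equals $\lceil (1-\alpha)(M+1)\rceil/(M+1)$. Standard ceiling manipulations then bracket this fraction by $1-\alpha$ from below and $1-\alpha+1/(M+1)$ from above, yielding both inequalities.

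The main obstacle is the treatment of ties, since rank-uniformity is only immediate when the joint distribution of scores is almost surely tie-free. I would handle this by assuming continuity of the score distribution (the standard conformal hypothesis) or, alternatively, by breaking ties via an independent uniform auxiliary variable appended to each score; either choice preserves exchangeability and eliminates the measure-zero ambiguity. A secondary subtlety is the exact definition of the empirical $(1-\alpha)$-quantile: I would fix it as the $\lceil(1-\alpha)(M+1)\rceil$-th order statistic of $\{s_1,\dots,s_M\}$ so that the ceiling arithmetic in the last step matches the upper bound $1-\alpha+1/(M+1)$ cleanly, and I would note that the lower bound $1-\alpha$ is distribution-free while the upper bound relies on the no-ties assumption. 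The remainder is bookkeeping, and since the result is standard I would simply refer to \cite{15} for the detailed derivation as the paper already does.
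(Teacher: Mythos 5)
Your proposal is correct and is essentially the standard split-conformal argument that the paper itself defers to (the paper gives no proof of Proposition~\ref{prop3}, citing Appendix~D of \cite{15} instead): exchangeability of the augmented score vector, uniformity of the rank of the test score, and ceiling arithmetic to bracket the coverage between $1-\alpha$ and $1-\alpha+\tfrac{1}{M+1}$, with the no-ties (or tie-breaking) caveat correctly flagged for the upper bound. One genuinely useful observation in your write-up: the guarantee requires $q$ to be the $\lceil (M+1)(1-\alpha)\rceil$-th order statistic of the calibration scores, whereas the paper's implementation text selects the $\lfloor (M+1)(1-\alpha)\rfloor$-th element, an indexing choice that would not deliver the stated lower bound.
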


Additionally, For the box uncertainty adopted in our paper, the score function $s$ can be represented as follows:
\begin{equation}
    s_\theta(w,\hat{y})=\text{max}\cup^m_{i=1}\left\{f_\theta^{\frac{\alpha}{2}}(w)_i-\hat{y}_i, \hat{y}_i-f_\theta^{1-\frac{\alpha}{2}}(w)_i\right\}
\end{equation}
where $f_\theta^a(w)$ is the predictive model to estimate the $a$-quantile and $\theta$ are the model weights. 

Therefore, we can compute $q$ following Proposition \ref{prop3} to satisfy the coverage requirement. This is achieved by sorting the batch of input data $\{s(w_n, \hat{y}_n)\}_{n=1}^M$ in ascending order by scores and selecting the $\lfloor(M+1)(1-\alpha)\rfloor$-th element of the sort as the threshold $q^*$. $\lfloor \cdot \rfloor$ represents the floor operation.

With the chosen $q^*$, the uncertainty set $\mathcal{Y}$ becomes:
\begin{multline}
    \mathcal{Y}= \mathcal{Y}_P\cup\mathcal{Y}_D =\{\hat{y}\in\mathbb{R}^3 | s(w,\hat{y})\leq q^*\}\\
    :=\left[f_\theta^{\frac{\alpha}{2}}(w)-q^*\mathbf{1}, f_\theta^{1-\frac{\alpha}{2}}(w)+q^*\mathbf{1}\right]
\end{multline}

\subsection{Optimization as a Differentiable Module}
Our end-to-end framework takes the optimization objective function of Model (\ref{18}) as the task loss. This requires the optimization problem to be end-to-end differentiable to enable gradient back-propagation. We denote the optimal value derived from solving Model (\ref{18}) as $\mathcal{L}_{task}$ and corresponding decision variables as $X^*$$=$$(x,s,v)^*$. Then, the gradient update value can be computed using the chain rule, with $\frac{\partial X*}{\partial \theta}$ computed via differentiating through Karush–Kuhn–Tucker (KKT) conditions.
\begin{equation}
    \frac{\partial \mathcal{L}_{task}}{\partial \theta}=\frac{\partial \mathcal{L}_{task}}{\partial X^*} \frac{\partial X^*}{\partial \theta}
\end{equation}

Ultimately, we derive two loss functions: one to improve the overall predictive capability of the framework and another to enhance its task-specific performance.
\begin{equation}
    \mathcal{L}=\lambda\cdot\mathcal{L}_{pred}+(1-\lambda)\cdot\mathcal{L}_{task}
\end{equation}

By introducing task-based loss, the network may sacrifice some predictive accuracy, but the framework gains several advantages when applied to real-world power grids:

\vspace{0.2cm}

\noindent\textbf{Statistical Efficiency. }During training, the convergence time and complexity for finding the optimal strategy are generally much lower than those required for building a comprehensive understanding. Similar to reinforcement learning, a robot does not need to develop human-like intelligence to accomplish its tasks. In this context, the network requires less data to achieve a balanced performance, allowing the framework to perform well even in data-scarce real-world scenarios.

\noindent\textbf{Robust Decision-making. }While the network's predictive accuracy may decrease, the introduction of conformal prediction enables us to quantify the coverage level of the predicted results. This allows the framework to make more robust decisions under worst-case scenarios, ensuring reliability in uncertain conditions.

\noindent\textbf{Explicit Constraint Expressions. }since the optimization model imposes constraints on the network's predictions, such as Constraint (\ref{7a}), large deviations in the network's predictions can lead to anomalies in the objective function of the optimization problem. These constraints inherently act as explicit hard constraints on the network predictions, which are typically challenging to impose directly within neural networks. This provides the network with a structured framework, enabling faster convergence during the initial training phase.

\vspace{0.2cm}

\section{Experiments}
\begin{figure}
    \centering
    \includegraphics[width=0.25\textwidth]{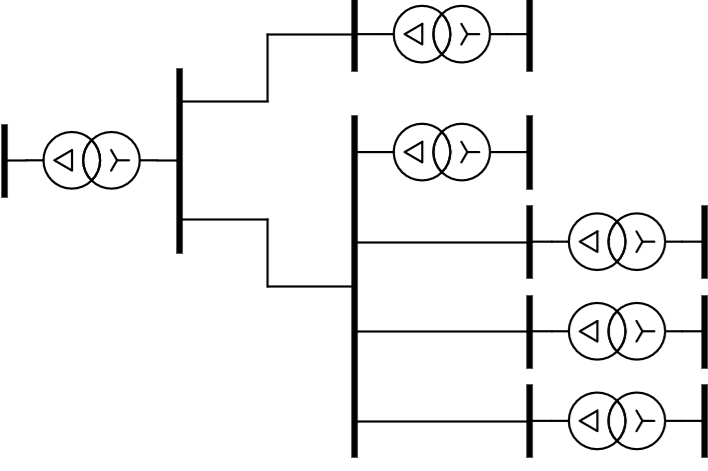}
    \caption{Simplified regional grid topology.}
    \label{fig2}
\end{figure}
\subsection{Dataset}
We evaluate our model using a dataset sourced from Southern California, which includes both weather data and grid data. For the grid component, we first extract the parameters of a real-world regional power grid. These parameters comprise bus indices, bus demands, inter-bus admittance matrices, real-time electricity prices, and slack bus voltage. Subsequently, to simplify the grid topology, we eliminate non-electrical components and model transformers between buses as equivalent injections. The final simplified topology is shown in Figure \ref{fig2}. This simplification ensures that the grid model retains essential electrical characteristics while reducing computational complexity. Additionally, we collect weather data, including wind speed, temperature, humidity, and solar radiation intensity, from the California Independent System Operator (CAISO) to complement the entire dataset.

\subsection{Simulation Results}
We first denote the traditional method as estimation-then-optimization (ETO), which is a widely used framework in decision-making under uncertainty. In this paradigm, the prediction task and the subsequent optimization task are decoupled. First, a predictive model estimates uncertain parameters, such as electricity prices or load demands, based on historical data. These estimates are then treated as fixed inputs for the optimization model, which determines the optimal decision strategy.

We first analyze the differences between the capabilities of our proposed method and the traditional ETO approach in terms of prediction accuracy and task performance. Subsequently, we present quantitative experiments and analyses to further evaluate our method.

\subsubsection{Prediction and Task Performance}
To evaluate the predictive capability and task-solving performance, the mean squared error (MSE) between the predicted values and ground-truth values, as well as the values of the optimization objective function, are utilized. In our method, predicted values are calculated as the mean of the upper and lower bounds of the box uncertainty. The experimental results are presented in Table \ref{tab1} and the $\mathcal{L}_{pred}$ and $\mathcal{L}_{task}$ losses are assigned different weights of $0.8$ and $0.2$.

\begin{table}[htb]
\centering
\begin{tabular}{cl|ccc}
\toprule
\#&\textbf{Method} & \textbf{Task Loss $\downarrow$} & \textbf{Prediction Loss $\downarrow$} & \textbf{Total Loss$ \downarrow$} \\
\midrule
1&\textbf{ETO}            & 682.91             & \textbf{332.62}          & 1015.53             \\
2&\textbf{Ours}$^*$       & \textbf{623.34}    & 341.27                   & \textbf{964.61}     \\
\bottomrule
\end{tabular}
\caption{Comparison of model performances.}
\label{tab1}
\end{table}

As shown in Table \ref{tab1}, our method demonstrates superior performance in terms of task loss and overall loss compared to the ETO approach, albeit with a slight reduction in predictive accuracy. This is because our method represents a trade-off between task-specific performance and predictive accuracy, placing greater emphasis on optimizing the task-specific strategy rather than achieving precise predictions.

\begin{figure}[htb]
    \centering
    \subfigure[Prediction effect at $\alpha=0.05$.]{%
        \label{fig:Contour_1}
        \includegraphics[width=0.40\textwidth]{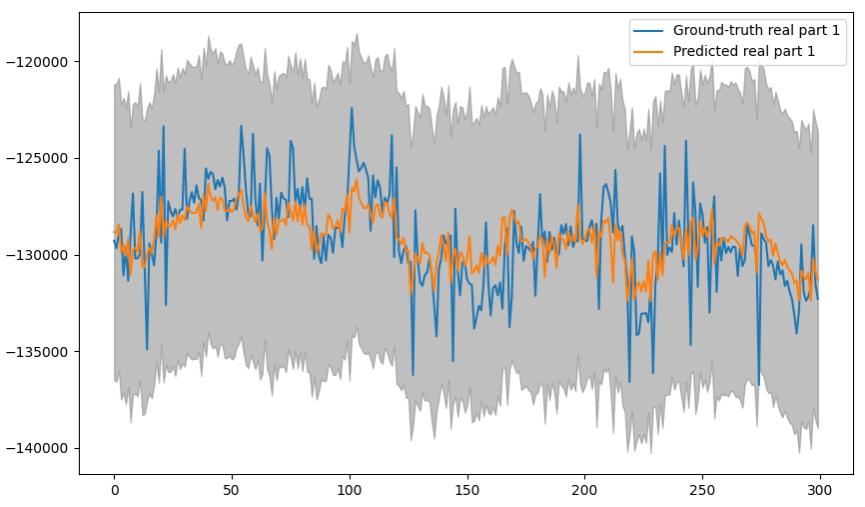}
    }\\
    \subfigure[Prediction effect at $\alpha=0.3$.]{%
        \label{fig:Contour_2}
        \includegraphics[width=0.40\textwidth]{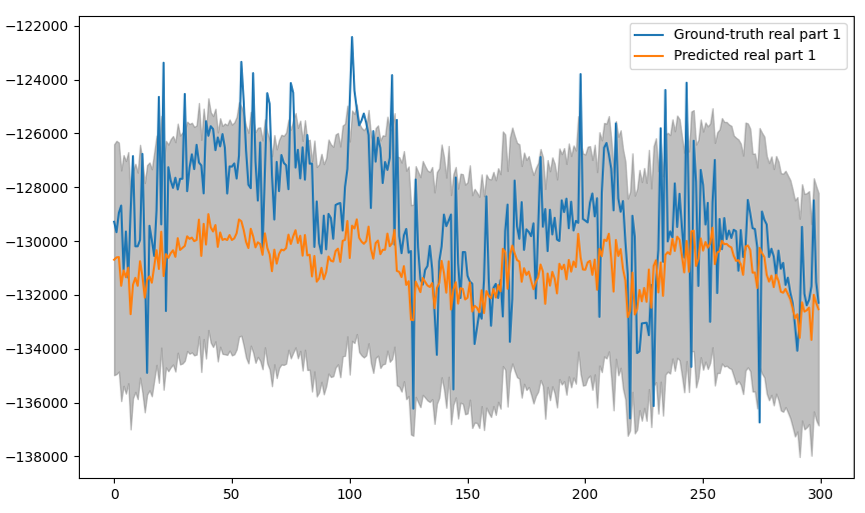}
    }%
    \caption{Predictive efficacy at varying confidence $\alpha$.}
    \label{fig3}
\end{figure}

\subsubsection{Loss Weights}
Table \ref{tab2} presents an analysis of the task performance under varying weights of $\mathcal{L}_{pred}$ and $\mathcal{L}_{task}$. In our experiments, the coverage level $\alpha$ is uniformly set to $0.1$. We observe that as the task weight increases from $0.1$ to $0.2$, both the total loss and task loss decrease. However, when the prediction loss weight falls below a certain threshold, the framework tends to converge quickly to a similar optimal decision, resulting in comparable task losses. In this scenario, the network struggles to provide accurate predictions, which ultimately undermines the practical significance of the derived optimal scheduling strategies. Therefore, it is critical to set appropriate weights for the loss components. We recommend setting the weight parameter $\lambda$ within the range of $0.7 - 0.9$, as this balance ensures that the framework can simultaneously maintain prediction accuracy and robust task performance. 

\begin{table}[htb]
\centering
\begin{tabular}{cc|ccc}
\toprule
&\textbf{Weight Ratio ($\lambda$)} & \textbf{Task} & \textbf{Prediction} & \textbf{Total} \\
\#&\textbf{$\mathcal{L}_{pred}:\mathcal{L}_{task}$} &      \textbf{Loss} &            \textbf{Loss} & \textbf{Loss} \\
\midrule
1&\textbf{0.9 : 0.1}                  & 663.27             & 334.80                   & 998.07        \\
2&\textbf{0.8 : 0.2}                  & 623.34             & 341.27                   & \textbf{964.61}\\
3&\textbf{0.7 : 0.3}                  & 617.71             & 349.53                   & 967.24        \\
4&\textbf{0.6 : 0.4}                  & 615.82             & 360.41                   & 976.23        \\
\bottomrule
\end{tabular}
\caption{Performance under different weight settings.}
\label{tab2}
\end{table}

\subsection{Coverage Level}
Figure \ref{fig3} illustrates the predictive performance of the model under varying coverage levels, visualized as the real part of the slack bus demand. As the confidence level $\alpha$ increases, the uncertainty interval provides better coverage of the ground-truth values. Consequently, higher confidence levels enhance the predictive capability and improve the rationality of strategies derived from the model. However, it is important to note that larger values of $\alpha$ result in better tasks and total loss. This is because, with a more relaxed feasible decision set, the optimal objective function value obtained is lower or equal to that of a stricter feasible set. However, such relaxed predictions may deviate from reality, leading to solutions that lack practical significance.

\section{Conclusion}
To provide actionable recommendations for real-world DER device deployments, we formulate a capacity expansion problem as a sequential decision-making model. To the best of our knowledge, this work represents the first attempt to incorporate a distribution network model into such a problem while simultaneously integrating neural networks for predicting uncertain parameters in an end-to-end manner. Subsequently, We propose a hybrid framework that seamlessly connects the optimization model with a neural network, enabling the entire framework to not only make accurate predictions but also adaptively select decisions based on the predictions. This adaptability ensures that the framework is more practical and deployable in the physical world. Finally, we validate the effectiveness of our approach using real-world data collected from Southern California. The results demonstrate the significance of our work for power grid scheduling research in real-world applications, highlighting its potential to improve decision-making in capacity expansion and DER device deployment.

\bibliography{conference}
\end{document}